% addiAlthough nonoverlapping pattern matchingtional options: [seceqn,secthm,crcready,onecolumn]
%\documentclass[ida]{iosart2x}

\documentclass[final,3p,times]{elsarticle}

\usepackage{amssymb}   % 数学符号表
\usepackage{caption}   % 图编号
\usepackage{subfigure} % 子图
\usepackage{amsmath}   % 公式
\usepackage{setspace}  % 公式间距
\usepackage{algorithm} % 伪代码
\usepackage{algpseudocode} % 伪代码
\usepackage{makecell}  % 表格
\usepackage{multirow}  % 表格
\usepackage[justification=centering]{caption} % 图标标题居中
\usepackage{multicol}
\usepackage{stfloats}  % h-放在此处;t-放在顶端;b-放在底端;p-在本页
\usepackage{color}     % 对正文的修改部分做高亮处理
\usepackage{amsthm}    % 证明结尾的白色方框
\usepackage{threeparttable}

%% Packages
\usepackage{dcolumn}

\usepackage{algorithm}
\usepackage{algpseudocode}

\newtheorem{definition}{Definition}
\newtheorem{example}{Example}
\newtheorem{theorem}{Theorem}

%% Definitions
%\newcolumntype{d}[1]{D{.}{.}{#1}}

%% Article Info
%\firstpage{1}
%\lastpage{5}
%\volume{1}
%\pubyear{2021}

\begin{document}
\begin{frontmatter} % The preamble begins here.
%
% paper title
% Titles are generally capitalized except for words such as a, an, and, as,
% at, but, by, for, in, nor, of, on, or, the, to and up, which are usually
% not capitalized unless they are the first or last word of the title.
% Linebreaks \\ can be used within to get better formatting as desired.
% Do not put math or special symbols in the title.
\title{Approximate Order-Preserving Pattern Mining for Time Series}
%
%
% author names and IEEE memberships
% note positions of commas and nonbreaking spaces ( ~ ) LaTeX will not break
% a structure at a ~ so this keeps an author's name from being broken across
% two lines.
% use \thanks{} to gain access to the first footnote area
% a separate \thanks must be used for each paragraph as LaTeX2e's \thanks
% was not built to handle multiple paragraphs
%
%\begin{comment}
	\author[1]{Yan Li}

	\author[2]{Jin Liu}
	
	\author[2]{Yingchun Guo}
	
	\author[2]{Jing Liu}
	
	\author[2]{{Youxi Wu}\corref{mycorrespondingauthor}}
	\ead{wuc567@163.com}
	
	\address[1]{School of Economics and Management, Hebei University of Technology, Tianjin 300401, China}
	
	\address[2]{School of Artificial Intelligence, Hebei University of Technology, Tianjin 300401, China}

\begin{abstract}
The order-preserving pattern mining can be regarded as discovering frequent trends in time series, since the same order-preserving pattern has the same relative order which can represent a trend. However, in the case where data noise is present, the relative orders of many meaningful patterns are usually similar rather than the same. To mine similar relative orders in time series, this paper addresses an approximate order-preserving pattern (AOP) mining method based on ($\delta$-$\gamma$) distance to effectively measure the similarity, and proposes an algorithm called AOP-Miner to mine AOPs according to global and local approximation parameters. AOP-Miner adopts a pattern fusion strategy to generate candidate patterns generation and employs the screening strategy to calculate the supports of candidate patterns. Experimental results validate that AOP-Miner outperforms other competitive methods and can find more similar trends in time series.

\end{abstract}

% Note that keywords are not normally used for peerreview papers.
\begin{keyword}
sequential pattern mining \sep
time series \sep
order-preserving pattern \sep
($\delta$-$\gamma$) distance \sep
approximate pattern mining
\end{keyword}

\end{frontmatter}

% For peer review papers, you can put extra information on the cover
% page as needed:
% \ifCLASSOPTIONpeerreview
% \begin{center} \bfseries EDICS Category: 3-BBND \end{center}
% \fi
%
% For peerreview papers, this IEEEtran command inserts a page break and
% creates the second title. It will be ignored for other modes.
%\IEEEpeerreviewmaketitle

\section{Introduction}
Sequential pattern mining \cite{ref1} is an important knowledge enhanced search method \cite{wutmis2} that has attracted great attention in recent years  \cite{ref3,tmisweak}. With the rapid development of network information technology, and in the era of big data, sequential pattern mining is becoming increasingly important and necessary. Due to its high efficiency and strong interpretability, it has been widely used in many applications, such as  virus similarity analysis \cite{ref6}, outlier sequence discovery \cite {leiduan, dse}, spatial co-location pattern discovery \cite {lizhenwang, lizhenwang2}, and  patient shock prediction \cite {ref7}.

As a common and important form of data, time series \cite{ref13, approximate} have been applied in numerous fields such as   multi-variate time series forecasting \cite{ref15,ref18, Jlin, zlibigdata},  stock prices prediction \cite{ref20}, and the analysis of EEG records \cite{ref21}. Unlike character sequences, a time series is a numerical sequence of data arranged in chronological order, which contains a large amount of regular information \cite {icde,Agrawaltkde}. To obtain valuable information, researchers have proposed many methods for the analysis of time series, such as three-way sequential pattern mining \cite{youxiins,minfan2020,youxitkdd},  neural network  method \cite {ref24,gnnAaai}, and transformer method \cite{ref25}.

% Bayesian optimization \cite{ref24}, and pattern sequence similarity analysis \cite{ref25}.

However, for some applications, the trend of a time series may be more meaningful than its actual values. For example, in stock analysis, the trends shown by stocks are more worth studying than the actual prices. In temperature prediction, a change in temperature is more meaningful than the actual value of the temperature. As a new sequential pattern mining method, order-preserving pattern mining was proposed \cite{ref26}, which does not need to transform a time series into a character sequence to mine representative trends. This mining method can mine the most frequent subsequences with the same relative orders from a time series and ensure the continuity of the time series. The potentially changing laws of the time series can therefore be discovered, which can help users to better analyze and predict the time series data. %Example \ref{example1} illustrates the principle of order-preserving pattern mining.

The existing order-preserving pattern mining algorithm can only mine identical, short, order-preserving patterns \cite{ref26, oprminer}. Obviously, identical trends are only found in a situation where data noise is not allowed . More importantly, shorter patterns contain less information.   However, if we allow for some data noise, we can find longer order-preserving patterns with similar trends. For example, the order-preserving patterns of (4,1,3,2,5,6) and (5,2,3,1,4,6) are not the same, they have a high level of similarity with the order-preserving pattern $\textbf{p} = (4,2,3,1,5,6)$, since compared with \textbf{p}, the local error for each position satisfies $\le \delta$, and the global error satisfies $\le \gamma$. Hence, the order-preserving pattern \textbf{p} occurs twice in \textbf{s} and meets the minimum support threshold. The order-preserving pattern $\textbf{p} = (4,2,3,1,5,6)$ is therefore a frequent approximate order-preserving pattern (AOP).

Inspired by ($\delta$,$\gamma$)-approximate pattern matching \cite{ref44, ref28} and with the aim of overcoming the drawbacks of exact order-preserving pattern mining, this paper presents an AOP mining method based on the ($\delta$-$\gamma$) distance to effectively measure the similarity, and proposes the AOP-Miner algorithm for mining AOPs. The contributions of this paper are as follows.

\begin{enumerate} 
	\item   To mine some longer order-preserving patterns with similar trends, we focus on AOP mining in which the local error does not exceed $\delta$ and the overall error does not exceed $\gamma$.
	
	\item   We propose the AOP-Miner algorithm, which has two key steps: candidate pattern generation and pattern support calculation. AOP-Miner adopts a pattern fusion strategy to generate candidate patterns and employs screening and pruning strategies to calculate the support.
	
	\item   A large set of experimental results from real time series datasets is presented to verify that AOP-Miner yields better performance than alternative methods, and can find longer order-preserving patterns with similar trends.
\end{enumerate}

The rest of this paper is arranged as follows. Section \ref{section2} summarizes related work. Section \ref{section3} gives the relevant definition of AOP mining. Section \ref{section4} proposes the AOP-Miner algorithm and analyzes its time and space complexities. Section \ref{section5} verifies the performance of the AOP-Miner algorithm. Section \ref{section6} presents the conclusions of this paper.

\section{RELATED WORK} \label {section2}

A current research hotspot is the use of sequential pattern mining with the aim of quickly finding patterns that meet the specific needs of users. Various types of mining methods have been derived for different problems, such as  gap constraint mining \cite{ref29,apin2014, apin2022}, negative sequence mining \cite{ref32,tkdd2022}, rule mining \cite{icdmwpart}, high utility mining \cite{ref33, ref4, wuutil, utilityeswa}, and contrast pattern mining \cite{Mercericdm,ref39}. Sequential pattern mining methods can be categorized in many ways, such as  different types of mining data.

%different approaches used to calculate the pattern support or

% Based on the different approaches used to calculate the pattern support, sequential pattern mining can be divided into exact and approximate.  Keogh et al. \cite{ref35} proposed the piecewise linear approximation (PAA) method, which segmented a time series, found the average value of each segment, and then mined frequent patterns. Min et al. \cite{ref37} divided characters into strong interest, medium  interest, and weak interest according to the principle of three-way decisions, and mined the patterns composed of strong interest and weak interest. These approaches used exact sequence pattern mining. Yun et al. \cite{ref38} developed a scheme based on the weight-based pattern mining framework and defined an approximation factor to relax the requirement of exact equality between the weighted support of the pattern and the minimum threshold, in order to find important patterns with/without noisy data; this approach used approximate sequential pattern mining.

%%%%

%Time series mining usually uses piecewise linear approximation (PAA) \cite{ref35} to process the original time series, then discretizes the results of PAA into character sequences using symbolic aggregate approximation (sax) \cite{ref36}.

Based on the different types of mining data used, sequential pattern mining can be divided into classical sequential and time series pattern mining. Classical sequential pattern mining \cite{truongins} is mainly used for discrete sequences such as transaction datasets, webpage click-streams, DNA sequences, and gene sequences. A time series is a numerical sequence composed of continuously changing values. Due to the high dimensionality and continuity of time series data, it is very difficult to mine time series directly. It is therefore necessary to discretize the original numerical information into data in other domains through a series of transformations before performing mining. For example, the SAX algorithm \cite{ref40} was proposed to symbolize the time series, allowing various classical methods to be used to analyze the time series. However, this transformation will introduce new noise, since the essence of the process is to re-represent the time series, thus making it different from the original time series.

%For example, Kim et al.\cite{ref41} first defined the order relationship between elements as a binary relationship ($<$,$>$) and proposed a prefix representation and nearest neighbor representation of the relative order, but did not consider the case that the elements had equal values. The order relationship between two values is in fact ternary ($ < $,=,$ > $). Cho et al. \cite{ref42} therefore extended this approach, expanding the binary relationship to a ternary one and designing an algorithm that could judge whether two strings were identical in sequence even if the elements were equal. This required that the relative order of the pattern and the subsequence were the same, and this approach was able to quickly locate a subsequence that was the same as a known pattern trend.

To effectively analyze the trends in time series data,  many methods were proposed to measure the distances of two time series, such as Euclidean distance and dynamic time warping (DTW) \cite{keoghdtw, kdd1994}. Among them, DTW is a famous method, since it can effectively handle shrink, scaling, and noise injection. However,  the DTW method is difficult to handle and find the same relative order in time series. For example, the DTW of two time series (13,11,18,23) and (16,12,19,25) is not zero, which means that they are different. However, the two time series have the same relative order (2,1,3,4), since 13 is the second lowest value in (13,11,18,23), 11 is the lowest, 18 is the third lowest, and 23 is the largest. To overcome this shortage, a simple method named order preserving pattern matching method was proposed {ref41}, in which the rank of each element in the sequence is adopted to represent the time series. This method can effectively find the same relative order  in the time series which can be regarded as patterns.  Inspired by order-preserving pattern matching, order-preserving pattern mining was proposed with the aim of mining frequently occurring trends from time series data \cite{ref26,oprminer} to help users analyze and predict trends in time series which can be used in disease spread analysis, temperature change analysis, and user behavior analysis. The abovementioned researches are based on exact matching, which can only mine frequent patterns with the same trends.

However, in the presence of data noise, it is necessary to use approximate pattern matching. For example, Paw et al. \cite{ref43} proposed the use of a matching accuracy based on the Hamming distance to measure similarity. If two strings have the same relative order after deleting up to {k} elements in the same positions, the matching is successful; however, the matching results for this problem are biased as the local similarity between the subsequence and the pattern cannot be measured. Mendivelso et al. \cite{ref44} proposed a similarity measurement method based on the ($\delta$-$\gamma$) distance to address this problem, and used local-global constraints to improve the accuracy of matching \cite{jinquandg}. All these methods are based on approximate order-preserving pattern matching.

Since current methods do not allow for data noise and only mine frequent patterns with identical trends, there is a need for a method of discovering frequent patterns with similar trends. Inspired by order-preserving pattern mining \cite{ref26,oprminer} and approximate order-preserving pattern matching \cite{ref44}, we investigate AOP mining and propose AOP-Miner to effectively mine frequent AOPs. The differences between the OPP-Miner method \cite{ref26} and AOP-Miner are as follows. OPP-Miner focuses on mining exact order-preserving patterns, which do not allow for data noise, while the approach in this paper is devoted to mining AOPs in a sequence with data noise. More importantly, OPP-Miner employs the order-preserving pattern matching method to calculate the support for each pattern, which requires repeated scanning of the whole sequence in the dataset. In contrast, AOP-Miner can effectively use the results of subpatterns to calculate the supports of superpatterns, an approach that can avoid the need for repeated scanning of the whole sequence. 

\section{Definitions} \label {section3}

\begin{definition}  \label{def1} 
	A time series \textbf{s} composed of {n} values can be expressed as \textbf{s}  = ${s}_1{s}_2 \cdots {s}_{i} \cdots {s}_{n}$ (1 $\le$ {i} $\le$ {n}), where $ {s}_{i} $ is called an element.
\end{definition}  

\begin{definition}  \label{def2} 
	The rank of element  $ {p}_{i} $ in pattern  $\textbf{p} ={p}_1{p}_2 \cdots {p}_{i} \cdots {p}_{m}$ with length  ${m} $ (1 $\le {i} \le {m} $) is  rankp($ {p}_{i} $) = 1+ $x $, where  $x$ means that there are  $x $ elements in  $\textbf{p} $ smaller than $ {p}_{i} $.
	\end {definition}
	
	\begin{definition}  \label{def3} 
		A relative order of pattern \textbf{p} with length {m} is an order-preserving pattern, represented by $r(\textbf{p}) = rankp({p}_1)rankp({p}_2)\cdots rankp( {p}_{m} $). 
	\end{definition}

	\begin{example}  \label{example2}           
		Suppose we have a pattern \textbf{p} = (12,15,10,13). Since element 12 is the second smallest in pattern \textbf{p}, its rank rankp(12) = 2. Similarly, element 15 is the largest in pattern \textbf{p}. Hence, rankp(15) = 4. The order-preserving pattern of \textbf{p} is then r(\textbf{p}) = (2,4,1,3).
		\end {example}
		
		\begin{definition}  \label{def4} 
			Given two time series \textbf{p} and \textbf{q} with length {m}, their order-preserving patterns are rankp(${p}_1$) rankp(${p}_2$)$\cdots$ rankp(${p}_m$) and rankp(${q}_1$)rankp(${q}_2$)$\cdots$ rankp(${q}_m$), respectively. The $\delta$ distance for \textbf{p} and \textbf{q} is the maximum value of the relative order, which is $d_{\delta}$(\textbf{p},\textbf{q}) = $\max_{{i}=1}^{{m}}$$|$rankp(${p}_j)$-rankp(${q}_j) |$. The $\gamma$ distance is the sum of the relative order, which is $d_{\gamma}$(\textbf{p},\textbf{q}) =$\sum_{{i}=1}^{{m}}$$|$rankp$({p}_j)$-rankp(${q}_j) |$.
		\end{definition}        
		
		\begin{definition}  \label{def5} 
			Suppose we have an order-preserving pattern \textbf{p} with length {m} and a subsequence \textbf{I} = ${s}_{t}{s}_{{t}+1}\cdots{s}_{{t}+{m}-1}$in the time series \textbf{s} (1 $\le {t} \le {n}-{m}+1$). If the order-preserving pattern r(\textbf{I}) and pattern \textbf{p} satisfy the ($\delta$-$\gamma$) distance constraint, i.e. $d_{\delta}$(\textbf{p},\textbf{I}) $\le \delta$ and $d_{\gamma}$(\textbf{p},\textbf{I}) $\le \gamma$, then \textbf{I} is a ($\delta$-$\gamma$) occurrence of \textbf{p} in \textbf{s}, where $\delta$ and $\gamma$ are two given, non-negative integers. To represent the occurrence concisely, in this paper, we use the first position of \textbf{I}  to represent the occurrence.
			\end {definition}
			
			\begin{definition}  \label{def6} 
				The number of ($\delta$-$\gamma$) occurrences of \textbf{p} in \textbf{s} is called the support, and is denoted by sup(\textbf{p},\textbf{s}). If the number of ($\delta$-$\gamma$) occurrences of \textbf{p} in \textbf{s} is not less than the {minimum} support threshold ({minsup}) (i.e., sup(\textbf{p},\textbf{s}) $\ge$ {minsup}), then \textbf{p} is a frequent ($\delta$-$\gamma$) order-preserving pattern.
				\end {definition}
				
				\begin{example}  \label{example3}
					Suppose we have $\delta = 1, \gamma = 2$, \textbf{p} = (2,1,4,3), and \textbf{s} = (12,15,10,13,11,18,23,9,26,20,13,16,12,19,25,20). All ($\delta$-$\gamma$) occurrences of \textbf{p} in \textbf{s} are as follows. For a time sub-series $\textbf{t}_\textbf{1}$  = (13,11,18,23) in \textbf{s}, the order-preserving r($\textbf{t}_\textbf{1}$) is (2,1,3,4). r($\textbf{t}_\textbf{1}$) and \textbf{p} satisfy ($\delta$-$\gamma$) order-preserving matching, since $|$2-2$|$ = 0 $\le \delta$, $|$1-1$|$ = 0 $\le \delta$, $|$4-3$|$ = 1 $\le \delta$, $|$3-4$|$ = 1 $\le \delta$, and $0+0+1+1 = 2 \le \gamma$. Therefore, \textbf{$\textbf{t}_\textbf{1}$} is a ($\delta$-$\gamma$) occurrence of \textbf{p} in \textbf{s}. Similarly, there are three other ($\delta$-$\gamma$) occurrences of \textbf{p} in \textbf{s}, which are $\textbf{t}_\textbf{2}$ = (23,9,26,20), $\textbf{t}_\textbf{3}$ = (16,12,19,25), and $\textbf{t}_\textbf{4}$ = (12,19,25,20).
					\end {example}

					\begin{definition}  \label{def7} 
						Our problem (AOP mining) is to find all frequent ($\delta$-$\gamma$) order-preserving patterns (AOPs).
						\end {definition}
						
						\begin{example}  \label{example4}  
							Suppose we have a time series \textbf{s} = (12,15,10,13,11,18,23,9,26,20,13,16,12,19,25,20), {$\delta$} = 1, {$\gamma$} = 2, and {minsup} = 4. Our aim is to mine all frequent AOPs. From Example 3, we know that (2,1,4,3) is a frequent AOP. Similarly, we know that all frequent AOPs are (1,2),(2,1),(1,2,3),(1,3,2),(2,1,3),(2,3,1),(3,1,2),(3,2,1),(1,2,3,4), (2,1,4,3), and (2,3,1,4).
							\end {example}
							
							It is worth noting that if $\delta$ = $\gamma$ = 0, then AOP mining can be seen as classical order-preserving pattern mining. Hence, AOP mining is a more general problem.

\section{AOP-Miner} \label {section4}

In AOP mining, there are two main steps: candidate pattern generation and calculation of the candidate pattern support. In Section \ref{subsect4.1}, we introduce the candidate pattern generation strategy. Section \ref{subsect4.2} proposes the candidate pattern support calculation strategy, and Section \ref{subsect4.3} presents AOP-Miner.

\subsection{Candidate pattern generation} \label{subsect4.1}

\subsubsection{Enumeration strategy}  

An enumeration strategy is a strategy that lists all possible situations one by one. Suppose pattern \textbf{p} is a frequent AOP with length {m} (1 $< {m}$). With an enumeration strategy, we will add a value at the end of \textbf{p}. There are therefore {m}+1 possible  relative order. Hence, a pattern can generate {m}+1 candidate patterns. An illustrative example is given below.

\begin{example}  \label{example5}   	
	Suppose we have two frequent patterns with length three, $ Fre_3 =$ \{(1,3,2),(2,1,3)\}. There are four candidate patterns based on the pattern (1,3,2). For example, if the newly added element is smaller than the original three elements, then the candidate pattern (2,4,3,1) is generated. In a similar way, the patterns (1,4,3,2), (1,4,2,3), and (1,3,2,4) are generated. Since each pattern can generate four candidate patterns, two frequent patterns can generate eight candidate patterns. The results are shown in Table \ref{tab1}.
	
	%        	\begin{spacing}{0.45} % 行间距 
	\begin{table}[ht] % h 是固定在当前位置
		\setlength{\abovecaptionskip}{0cm} % 调整间距
		%\captionsetup{font={small}}        % 标题字体大小
		\caption{Generating candidate patterns using an enumeration strategy}\label {tab1}
		\centering
		%        		\scalebox{0.97}{
		\begin{tabular}{cccc}
			\hline
			${Fre}_3$       & Candidate patterns with length four ${C}_4$        \\ \hline
			
			(1,3,2)        & 	(2,4,3,1), (1,4,3,2), (1,4,2,3), (1,3,2,4)          \\
			
			(2,1,3)    & (3,2,4,1), (3,1,4,2), (2,1,4,3), (2,1,3,4)     \\
			\hline          				
		\end{tabular}
	\end{table}
	%       \end{spacing}
	
	\end {example}        	
	
	Obviously, the higher the number of candidate patterns, the slower the mining speed. In the next subsection, we will introduce our pattern fusion strategy, which can effectively reduce the number of candidate patterns.
	
	\subsubsection{Pattern fusion strategy}  
	The pattern fusion strategy was proposed in \cite{ref26}, which uses the two concepts of order-preserving prefix and suffix subpatterns described as follows.
	
	\begin{definition}  \label{def8} 
		Suppose we have a pattern $\textbf{p} = ({p}_1{p}_2\cdots {p}_{m})$. After removing the last element $p_m$, the remaining subpattern $\textbf{q} = ({p}_1{p}_2\cdots {p}_{{m}-1})$ is the prefix subpattern of pattern \textbf{p}. Its order-preserving prefix subpattern is the order-preserving pattern of the prefix subpattern, and is denoted as prefixorder(\textbf{p}). Similarly, after removing the first element ${p}_1$, the remaining subpattern $\textbf{h} = ({p}_2{p}_3\cdots {p}_{{m}})$ is the suffix subpattern of pattern \textbf{p}, and its order-preserving suffix subpattern is the order-preserving pattern of the suffix subpattern, denoted as suffixorder(\textbf{p}). 
		\end {definition}
		
		\begin{definition}  \label{def9} 
			\textbf{ Pattern fusion:} Suppose there are two patterns $\textbf{p} = ({p}_1{p}_2 \cdots  {p}_{m})$ and $ \textbf{q} = ({q}_1{q}_2\cdots {q}_{m})$ with length {m}. If suffixorder(\textbf{p}) = prefixorder(\textbf{q}), then we can generate one or two superpatterns with length {m}+1 by pattern fusion of \textbf{p} and \textbf{q}, i.e., \textbf{t} = \textbf{p} $\oplus$ \textbf{q} or \textbf{t},\textbf{k} = \textbf{p} $\oplus$ \textbf{q} There are three possible cases:
			
			\textbf{Case 1}: ${p}_1 > {q}_{m}$. In this case, a superpattern \textbf{t} is generated. First, we let ${t}_1 = {p}_1+1$. We then compare ${q}_{j}$ ($1 \le {j} \le {m} $) with ${p}_1$. If ${q}_{j}>{p}_1$, then ${t}_{{j}+1} =  {q}_{j}+1$. Otherwise, ${t}_{{j}+1} = {q}_{j}$.
			
			\textbf{Case 2}: ${p}_1 = {q}_{m}$. In this case, two superpatterns \textbf{t} and \textbf{k} are generated. For pattern $\textbf{t} = ({t}_1{t}_2 \cdots {t}_{m}{t}_{{m}+1})$, let ${t}_1 = {p}_1+1$. We then compare ${q}_{j}$ $(1 \le {j} \le {m})$ with ${p}_1$. If ${q}_{j} > {p}_1$, then ${t}_{{j}+1} = {q}_{j}+1$. Otherwise, ${t}_{{j}+1} = {q}_{j}$. For pattern $\textbf{k} = ({k}_1{k}_2 \cdots  {k}_{m}{k}_{{m}+1})$, we let ${k}_{{m}+1} = {q}_{m}+1$. We then compare ${p}_{i}$ $(1 \le {i} \le {m})$ with ${q}_{m}$. If ${p}_{i} > {q}_{m}$, then ${k}_{i} = {p}_{i}+1$. Otherwise, ${k}_{i} = {p}_{i}$.
			
			\textbf{Case 3}: ${p}_1 < {q}_{m}$. In this case, a superpattern \textbf{t} is generated. First, let ${t}_{{m}+1} = {q}_{m}+1$. We then compare ${p}_{i}$ $(1 \le {i} \le {m})$ with ${q}_{m}$. If ${p}_{i} > {q}_{m} $, then ${t}_{i} = {p}_{i}+1$. Otherwise,  ${t}_{i} = {p}_{i}$.
			\end {definition}

			\begin{example}  \label{example6}
				Suppose we have an order-preserving pattern $\textbf{p} = (1,3,2)$ which can join with patterns $\textbf{q} = (2,1,3)$ and $ \textbf{r} = (3,2,1) $, since suffix(\textbf{p}) $= (3,2)$, prefix(\textbf{q})$ = (2,1) $, prefix(\textbf{r}) $= (3,2)$, and their order-preserving patterns are  suffixorder(\textbf{p}) = prefixorder(\textbf{q}) = prefixorder(\textbf{r}) $= (2,1)$.
				\end {example}

				For patterns \textbf{p} and \textbf{r},  suffixorder(\textbf{p}) = prefixorder(\textbf{r}) and ${p}_1 = {r}_3 = 1$, which satisfies Case 2. Hence, \textbf{p} and \textbf{r} can generate two patterns \textbf{t} and \textbf{k} with length four. We know that $\textbf{t} = (2,4,3,1)$ and $\textbf{k} = (1,4,3,2)$, as shown in Fig. \ref{fig.3}.
				
				\begin{figure}[h]   % 放在当前页的最上方
					\setlength{\abovecaptionskip}{0cm} % 调整间距
					%\captionsetup{font={small}}        % 标题字体大小
					\centering
					\includegraphics[width=0.4\linewidth]{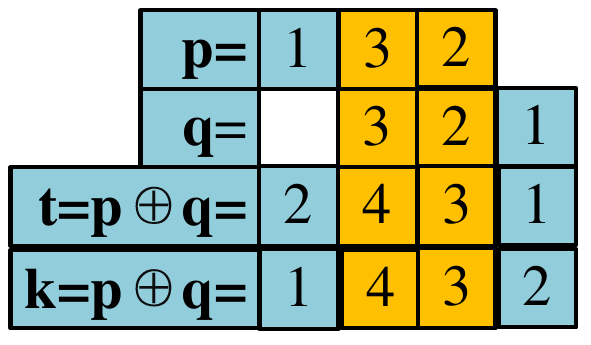}
					\caption{ An example of Case 2}
					\label{fig.3}
				\end{figure}

				We know that suffixorder(\textbf{p}) = prefixorder(\textbf{q}) $= (2,1)$ and $1 < 3$, which indicates Case 3. Hence, ${t}_4 = {q}_3+1 = 4$. We then compare ${p}_1, {p}_2$, and ${p}_3$ with ${q}_3$. Since ${p}_1, {p}_2$, and ${p}_3$ are less than ${q}_3$, $ ({t}_1{t}_2{t}_3) = ({p}_1{p}_2{p}_3) = (1,3,2) $ . Hence, $ {t} = (1,3,2,4) $, as shown in Fig. \ref{fig.2}.
				
				\begin{figure}[h]   % 放在当前页的最上方
					\setlength{\abovecaptionskip}{0cm} % 调整间距
					%\captionsetup{font={small}}        % 标题字体大小
					\centering
					\includegraphics[width=0.4\linewidth]{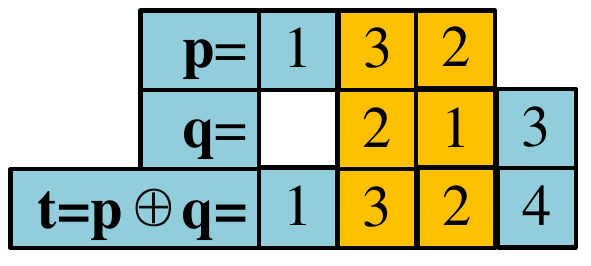}
					\caption{ An example of Case 3}
					\label{fig.2}
					%        		\vspace{-0.1cm}    % 表标题与正文的行间距
				\end{figure}

				\begin{example}  \label{example7}      
					We use the same frequent pattern set ${Fre}_3$ as in Example \ref{example5} to generate candidate patterns with length four using a pattern fusion strategy.
					
					From Example \ref{example6}, we know that (1,3,2) $\oplus$ (2,1,3) generates (1,3,2,4). Similarly, we know that (2,1,3) $\oplus$ (1,3,2) generates (3,1,4,2) and (2,1,4,3). The results are shown in Table \ref{tab2}.
					
					%\begin{spacing}{0.45} % 行间距 
					\begin{table}[ht] % h 是固定在当前位置
						\setlength{\abovecaptionskip}{0cm} % 调整间距
						%\captionsetup{font={small}}        % 标题字体大小
						\caption{Generating candidate patterns using a pattern fusion strategy}\label{tab2}
						\centering
						\scalebox{0.97}{
							\begin{tabular}{cccc}
								\hline
								${Fre}_3$       & Candidate patterns with length four ${C}_4$        \\ \hline
								
								(1,3,2)       & 	(1,3,2,4)         \\
								
								(2,1,3)   & (3,1,4,2), (2,1,4,3)     \\
								\hline          				
						\end{tabular}}
					\end{table}
					%\end{spacing}
					\end {example}   
					
					By comparing Tables \ref{tab1} and \ref{tab2}, we can see that the enumeration strategy generates eight candidate patterns with length four, while the pattern fusion strategy generates only three candidate patterns. Examples \ref{example5} and \ref{example7} show that the pattern fusion strategy outperforms the enumeration strategy, since the pattern fusion strategy can effectively reduce the number of candidate patterns, thereby improving the mining efficiency of the algorithm.
					
					\subsection{Support calculation}\label{subsect4.2}
					\subsubsection{segtreeBA and bitBA} 
					
					To the best of our knowledge, only the segtreeBA and bitBA algorithms \cite{ref44} have been developed for calculating the support of a ($\delta$-$\gamma$) order-preserving pattern. To calculate the support of an order-preserving pattern, these two algorithms need to scan all possible candidate occurrences. Example \ref{example8} illustrates the principle of operation of these two algorithms.
					
					\begin{example}  \label{example8}
						Given a time series \textbf{s} = (12,15,10,13,11,18,23, 9,26,20,13,16,12,19,25,20), {minsup} = 4, $\delta$ = 1, $\gamma$ = 2, we can use the segtreeBA and bitBA algorithms to find the candidate occurrences of pattern (2,3,1,5,4).	
						
						The segtreeBA and bitBA algorithms scan the time series from the beginning to the end to enumerate all candidate occurrences one by one \cite{ref44}. For example, the first candidate occurrence is 1, and its corresponding subsequence is (12,15,10,13,11). There are 12 candidate occurrences, since the lengths of s and the pattern are 16 and 5, respectively, and 16-5+1 = 12. Thus, the last candidate occurrence is 12 and its corresponding subsequence is (16,12,19,25,20).
						\end {example}

						To calculate the support of a candidate pattern, both algorithms need to scan the database and compare each candidate occurrence with the pattern to determine whether it meets the approximate matching criterion. The database is therefore scanned multiple times, which greatly increases the calculation time. To solve this problem, we propose the Alar algorithm, which can quickly calculate the support of the candidate pattern without repeatedly scanning the database.
						
						\subsubsection{Alar algorithm} 
						
						In this section, we describe the Alar algorithm, which finds all frequent AOPs with length ${m}$+1 using the results for all frequent AOPs with length ${m}$. The Alar algorithm applies two key strategies to reduce the support calculation: a screening strategy to find the candidate occurrences of a superpattern based on the occurrences of subpatterns, and a pruning strategy to further reduce the number of candidate patterns. The principles of operation of these two strategies are as follows.
						
						In the pattern fusion strategy, each candidate pattern is generated by joining two frequent ($\delta$-$\gamma$) preserving patterns, i.e., \textbf{t} = \textbf{p} $\oplus$ \textbf{q}. Thus, we know that if $x$ is an occurrence of pattern \textbf{t}, then prefixorder($x$) must be an occurrence of pattern \textbf{p} and suffixorder($x$+1) must be an occurrence of pattern \textbf{q}. Hence, if $x$ is an occurrence of pattern \textbf{p} and $x$+1 is an occurrence of pattern \textbf{q}, then $x$ may be an occurrence of pattern \textbf{t}. Otherwise, $x$ is not an occurrence of pattern \textbf{t}. We can then develop the following screening strategy.
						
						\textbf{Screening strategy}: If $x$ is an occurrence of pattern \textbf{p}, but $x$+1 is not an occurrence of pattern \textbf{q}, then $x$ is not an occurrence of pattern \textbf{t}. Similarly, if $x$-1 is not an occurrence of pattern \textbf{p}, but $x$ is an occurrence of pattern \textbf{q}, then $x$-1 is not an occurrence of pattern \textbf{t}.

						\begin{theorem}\label{thm1}
							The screening strategy is correct.
						\end{theorem}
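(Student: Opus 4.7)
The screening strategy is the contrapositive of the positive assertion introduced just before the theorem: if $x$ is a $(\delta\text{-}\gamma)$ occurrence of $\textbf{t}=\textbf{p}\oplus\textbf{q}$ in $\textbf{s}$, then $x$ is a $(\delta\text{-}\gamma)$ occurrence of $\textbf{p}$ and $x{+}1$ is a $(\delta\text{-}\gamma)$ occurrence of $\textbf{q}$. My plan is to prove this positive assertion directly; the screening strategy then follows as its contrapositive. By the symmetry between prefix and suffix, it suffices to establish the prefix half: assuming $d_\delta(\textbf{t},r(\textbf{I}))\le\delta$ and $d_\gamma(\textbf{t},r(\textbf{I}))\le\gamma$ for $\textbf{I}=s_x\cdots s_{x+m}$, I will derive $d_\delta(\textbf{p},r(\textbf{I}_1))\le\delta$ and $d_\gamma(\textbf{p},r(\textbf{I}_1))\le\gamma$ for $\textbf{I}_1=s_x\cdots s_{x+m-1}$. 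In fact both distances will be shown monotone non-increasing under dropping the last position.

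The starting point is that $\textbf{p}$ equals the order-preserving prefix of $\textbf{t}$ by construction of the fusion, so deleting the final rank and renumbering yields the explicit formulas
\begin{equation*}
p_i=t_i-[t_i>t_{m+1}],\qquad r(\textbf{I}_1)_i=r(\textbf{I})_i-[r(\textbf{I})_i>r(\textbf{I})_{m+1}].
\end{equation*}
Writing $\phi(i)=t_i-r(\textbf{I})_i$ and $a_i=[t_i>t_{m+1}]-[r(\textbf{I})_i>r(\textbf{I})_{m+1}]\in\{-1,0,+1\}$, the new per-position discrepancy is $|\phi(i)-a_i|$. A case split on $a_i$ partitions positions $i\le m$ into three groups: $a_i=0$ (discrepancy unchanged); $a_i=-1$, which forces $\phi(i)\le\phi(m{+}1)-2$; and $a_i=+1$, which forces $\phi(i)\ge\phi(m{+}1)+2$. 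The $\delta$ bound then follows from a short subcase analysis on the sign of $\phi(i)$: in the $a_i=-1$ case one verifies $|\phi(i)+1|\le\max(|\phi(i)|,|\phi(m{+}1)|)$, and the $a_i=+1$ case is symmetric, which together give $d_\delta(\textbf{p},r(\textbf{I}_1))\le d_\delta(\textbf{t},r(\textbf{I}))$.

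The $\gamma$ bound is the main obstacle, since the naive triangle inequality $|\phi(i)-a_i|\le|\phi(i)|+|a_i|$ is too loose: one can have $\sum_i|a_i|>|\phi(m{+}1)|$. To close this gap I would introduce $\epsilon_i=|\phi(i)-a_i|-|\phi(i)|\in\{-1,0,+1\}$, observe that $\epsilon_i=+1$ occurs precisely when $a_i$ and $\phi(i)$ have opposite signs, and invoke the conservation law that $\textbf{t}$ and $r(\textbf{I})$ are each permutations of $\{1,\ldots,m{+}1\}$, which yields $\#\{i:a_i=-1\}-\#\{i:a_i=+1\}=\phi(m{+}1)$. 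Combining this identity with the sign constraints above shows that when $\phi(m{+}1)\ge 0$ there are no ``bad'' positions among those with $a_i=+1$, and $\sum_i\epsilon_i=\phi(m{+}1)-2C_g$, where $C_g$ counts the ``good'' positions (those with $a_i=-1$ and $\phi(i)<0$); the case $\phi(m{+}1)\le 0$ is symmetric. Either way $\sum_i\epsilon_i\le|\phi(m{+}1)|$, so $d_\gamma(\textbf{p},r(\textbf{I}_1))=d_\gamma(\textbf{t},r(\textbf{I}))-|\phi(m{+}1)|+\sum_i\epsilon_i\le d_\gamma(\textbf{t},r(\textbf{I}))$. Running the identical argument from the other end handles $\textbf{q}$ at position $x{+}1$, completing the proof.
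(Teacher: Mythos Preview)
Your proof is correct and considerably more complete than the paper's. The paper's argument is essentially a two-line contradiction: it assumes $x$ is an occurrence of $\textbf{t}$ while $x{+}1$ is not an occurrence of $\textbf{q}$, and then simply asserts ``$x{+}1$ is an occurrence of pattern $\textbf{q}$, since $x$ is an occurrence of pattern $\textbf{t}$'', deriving the contradiction. The crucial implication---that a $(\delta\text{-}\gamma)$ occurrence of $\textbf{t}$ restricts to $(\delta\text{-}\gamma)$ occurrences of its order-preserving prefix and suffix---is stated just above the theorem but never justified.

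You take the genuinely different route of actually proving that implication. Your rank-deletion formulas $p_i=t_i-[t_i>t_{m+1}]$ and the analogous formula for $r(\textbf{I}_1)$ are correct (assuming distinct values in the time series, which is the paper's implicit setting), and your case analysis for the $\delta$ bound is clean. The $\gamma$ bound is where the real content lies: the permutation identity $N_{-}-N_{+}=\phi(m{+}1)$ combined with the sign constraints $\phi(i)\le\phi(m{+}1)-2$ (when $a_i=-1$) and $\phi(i)\ge\phi(m{+}1)+2$ (when $a_i=+1$) is exactly what is needed, and your computation $\sum_i\epsilon_i=\phi(m{+}1)-2C_g\le|\phi(m{+}1)|$ in the case $\phi(m{+}1)\ge 0$ (and symmetrically otherwise) is correct. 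What your approach buys is an honest proof of a fact that is not obvious: both distances are monotone non-increasing when an endpoint is dropped and the remaining positions are re-ranked. The paper's approach buys brevity at the cost of leaving this key monotonicity unproved.
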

						\begin{proof}        
							\textbf{Proof}. The proof is by contradiction. We assume that $x$ is an occurrence of pattern \textbf{t}, and $x$ is an occurrence of pattern \textbf{p}, but that $x$+1 is not an occurrence of pattern \textbf{q}. Hence, $x$+1 is an occurrence of pattern \textbf{q}, since $x$ is an occurrence of pattern \textbf{t}. This contradicts the fact that $x$+1 is not an occurrence of pattern \textbf{q}. Similarly, we know that if $x$-1 is not an occurrence of pattern \textbf{p}, then $x$-1 is not an occurrence of pattern \textbf{t}.
						\end{proof}
						
						We use this screening strategy to effectively screen out feasible candidate occurrences. All occurrences of \textbf{p} and \textbf{q} are stored in arrays $A_p$ and $A_q$, respectively. Example \ref{example9} illustrates the principle of this strategy. 
						
						\begin{example}  \label{example9}
							We use the same data as in Example \ref{example4}, where (2,3,1,4) and (2,1,4,3) are two frequent patterns of length four. We know that (2,3,1,4) $\oplus$ (2,1,4,3) can generate the candidate pattern (2,3,1,5,4). 
							\end {example}
							
							For a length-four frequent pattern (2,3,1,4), there are four occurrences in \textbf{s}: \{1,3,6,11\}, which are stored in an array {$A_p$}. The corresponding subsequences are ($s_1s_2s_3s_4$), ($s_3s_4s_5s_6$), ($s_6s_7s_8s_9$), and ($s_{11}s_{12}s_{13}s_{14}$), and all of these subsequences are occurrences of pattern (2,3,1,4) with $\delta =1 $ and $\gamma = 2$. Similarly, there are four occurrences of pattern (2,1,4,3) in \textbf{s}: \{4,7,12,13\}, which are stored in {$A_q$}. Table \ref{tab3} shows the occurrences in Example \ref{example9}.
							
							%        		\begin{spacing}{0.45} % 行间距 
							\begin{table}[ht] % h 是固定在当前位置
								\setlength{\abovecaptionskip}{0cm} % 调整间距
								%\captionsetup{font={small}}        % 标题字体大小
								\caption{Occurrences of frequent patterns in Example \ref{example9}}\label{tab3}
								\centering
								\scalebox{0.97}{
									\begin{tabular}{cccc}
										\hline
										Frequent pattern      & Occurrences with $\delta = 1 $ and $\gamma = 1 $        \\ \hline
										
										(2,3,1,4)       & \{1,3,6,11\}        \\
										
										(2,1,4,3)   & \{4,7,12,13\}    \\
										\hline          				
								\end{tabular}}
							\end{table}
							%        	\end{spacing}
							
							We now generate the candidate occurrences of pattern (2,3,1,5,4) based on the occurrences of (2,3,1,4) and (2,1,4,3). From Table \ref{tab3}, we know that 1 is an occurrence of (2,3,1,4), but 2 is not an occurrence of (2,1,4,3). Thus, 1 cannot be a candidate occurrence of (2,3,1,5,4) according to the screening strategy. Since 3 is an occurrence of (2,3,1,4) and 4 is an occurrence of (2,1,4,3), 3 can be a candidate occurrence of (2,3,1,5,4) according to the screening strategy. Similarly, we know that both 6 and 11 can be candidate occurrences. Hence, the corresponding subsequences of the candidate occurrences are (10,13,11,18,23), (18,23,9,26,20), and (13,16,12,19,25).
							
							We can see that after applying the screening strategy, there are only three candidate occurrences, far fewer than in Example \ref{example8}. The lower the number of candidate occurrences, the higher the efficiency of the algorithm.

							To further improve the efficiency, we also propose a pruning strategy to reduce the number of candidate patterns.
							
							\textbf{Pruning strategy}: If the number of candidate occurrences of a pattern is less than the threshold {minsup}, then the pattern can be pruned. 
							
							\begin{theorem}\label{thm2}
								The pruning strategy is correct.
								\end {theorem}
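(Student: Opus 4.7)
The plan is to bound the true support of a candidate pattern by the cardinality of its screened candidate-occurrence set, and then invoke the definition of frequency.

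First, I would appeal to Theorem~\ref{thm1}. For a candidate pattern $\textbf{t} = \textbf{p}\oplus\textbf{q}$ generated by the pattern fusion strategy, Theorem~\ref{thm1} guarantees that every genuine $(\delta$-$\gamma)$ occurrence $x$ of $\textbf{t}$ must simultaneously satisfy that $x$ is an occurrence of $\textbf{p}$ and $x{+}1$ is an occurrence of $\textbf{q}$. Equivalently, the set of positions that survive the screening strategy is a superset of the true occurrence set of $\textbf{t}$ in $\textbf{s}$, so $\mathrm{sup}(\textbf{t},\textbf{s}) \le |\{\text{candidate occurrences of }\textbf{t}\}|$.

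Next I would close the argument by contraposition. Suppose the number of candidate occurrences of $\textbf{t}$ is strictly less than the threshold $minsup$. Combining this with the inequality above yields $\mathrm{sup}(\textbf{t},\textbf{s}) < minsup$, and by Definition~\ref{def6} the pattern $\textbf{t}$ is therefore not a frequent AOP. Hence $\textbf{t}$ can be discarded from further consideration without losing any frequent pattern, which is exactly what pruning requires.

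The hard part is essentially nil once Theorem~\ref{thm1} is in hand; the only thing I would be careful about is clarifying that the quantity called ``number of candidate occurrences'' in the hypothesis really refers to the size of the screened set produced by the screening strategy, so that the bound above is literally the one appearing in the hypothesis. Once this correspondence is stated precisely, Theorem~\ref{thm2} follows as an immediate corollary of Theorem~\ref{thm1} together with Definition~\ref{def6}.
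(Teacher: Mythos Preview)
Your proposal is correct and follows essentially the same approach as the paper's own proof, which simply asserts that the support cannot exceed the number of candidate occurrences and concludes directly. You add welcome precision by grounding that inequality in Theorem~\ref{thm1} and by making explicit that ``candidate occurrences'' means the screened set, but the underlying argument is identical.
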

								\begin{proof}
									Obviously, the support of a pattern is not greater than the number of candidate occurrences, if the number of candidate occurrences of one pattern is less than {minsup}. Hence, the pattern cannot be an AOP and can be pruned.
								\end{proof}
								
								\begin{example}  \label{example10}
									We use the same data as in Example \ref{example9}. We know that pattern (2,3,1,5,4) has three candidate occurrences, which is less than ${minsup} = 4$. Hence, pattern (2,3,1,5,4) can be pruned.
									\end {example}
									
									The principle of the Alar algorithm is as follows. Alar judges whether any two frequent patterns \textbf{p} and \textbf{q}  with length {m} can be fused. If yes, then Alar generates the candidate patterns according to Definition \ref{def9}, and determines whether or not the candidate patterns are frequent using the Checking algorithm. The Alar algorithm is shown in Algorithm \ref{alg:1}.
									
									\begin{algorithm}[t] 
										\caption{Alar}
										\hspace*{0.02in} {\bf Input:} frequent patterns $ \textit{F}_m$, \textbf{s}, $ \delta $, $ \gamma $, \textit{m} \\
										\hspace*{0.02in} {\bf Output:} frequent patterns $\textit{F}_{m+1}$ and their corresponding occurrence arrays 
										\begin{algorithmic}[1]
											\For {each \textbf{p} in $\textit{F}_m$}
											\For {each \textbf{q} in $\textit{F}_m$}
											\If {suffix(\textbf{p}) $=$ prefix(\textbf{q})}
											\If {$\textit{p}_1 = \textit{q}_m $}
											\State \textbf{t} $\cup$ \textbf{k} $\leftarrow$\textbf{p} $\oplus$ \textbf{q}; 
											\State \textit{$F_{m+1}$} $\leftarrow$ \textit{$F_{m+1}$} $\cup$  Checking ($\textbf{t}$, $\textit{A}_p$, $\textit{A}_q,\textbf{s}, \delta, \gamma$,  $\textit{minsup}$) $\cup$ Checking ($\textbf{k}, \textit{A}_p, \textit{A}_q, \textbf{s}$, $\delta, \gamma, \textit{minsup}$); 
											\Else 
											\State \textbf{t} $\leftarrow$\textbf{p} $\oplus$ \textbf{q}; 
											\State \textit{$F_{m+1}$} $\leftarrow$ \textit{$F_{m+1}$} $\cup$  Checking($\textbf{t},\textit{A}_p, \textit{A}_q $, $\textbf{s}, \delta, \gamma$,  $\textit{minsup}$); 
											\EndIf
											\EndIf
											\EndFor
											\EndFor
											\State return $\textit{F}_{\textit{m}+1}$;
										\end{algorithmic}
										\label{alg:1}
									\end{algorithm}
									
									The Checking algorithm adopts the screening strategy to find candidate occurrences and applies the pruning strategy to decrease the number of candidate patterns. Finally, it uses the Matching algorithm to calculate the support of the candidate pattern. The Checking algorithm is shown in Algorithm \ref{alg:2}.
									
									\begin{algorithm}[t] %\label{alg2}
										\caption{Checking}
										\hspace*{0.02in} {\bf Input:}
										superpattern \textbf{t}, occurrence arrays $\textit{A}_p$ and $\textit{A}_q$, \textbf{s}, $\delta$, $\gamma$ and \textit{minsup}  \\
										\hspace*{0.02in} {\bf Output:}
										frequent pattern \textbf{t} and its corresponding occurrence array $\textit{A}_t$
										\begin{algorithmic}[1]
											\For {for each occurrence \textit{occ} in $\textit{A}_p$}    
											\If { $\textit{occ}+1$ in $A_q$}
											\State $\textit{C}_t$.append(\textit{occ});
											\EndIf
											\EndFor
											\If {  len($\textit{C}_t) <\textit{minsup} $}
											\State  return Null; 
											\EndIf
											\State  $\textit{A}_t \leftarrow$ Matching($\textit{C}_t$, \textbf{t}, \textbf{s}, $\delta$, $\gamma$, \textit{minsup});
											\If { len($\textit{A}_t) <\textit{minsup} $ }
											\State return Null; 
											\EndIf 
											\State return $\textit{A}_t$;
											
										\end{algorithmic}
										\label{alg:2}
									\end{algorithm}
									
									The Matching algorithm uses the definitions to calculate the support and is shown in Algorithm \ref{alg:3}.
									
									\begin{algorithm}[t] 
										\caption{Matching}
										\hspace*{0.02in} {\bf Input:}
										$\textit{C}_t$, \textbf{t}, \textbf{s}, $\delta$, $\gamma$, \textit{minsup}   \\
										\hspace*{0.02in} {\bf Output:}
										occurrence array $\textit{A}_t$
										\begin{algorithmic}[1]
											\For {each occurrence \textit{occ} in $\textit{C}_t$ }   
											\State \textit{cand} $\leftarrow$ \textbf{s}[\textit{occ}: \textit{occ}$+len($\textbf{t})];
											\State   $\textbf{f}$ $\leftarrow$ r($\textit{cand}$);
											\If { $d_{\delta}(\textbf{f},\textbf{t}) \le \delta$ and $d_{\gamma}(\textbf{f},\textbf{t}) \le \gamma$ }
											\State    $\textit{A}_t$.append(\textit{occ});
											\EndIf
											\EndFor
											\State  return  $\textit{A}_t$;
										\end{algorithmic}
										\label{alg:3}
									\end{algorithm}
									
									\subsection{AOP-Miner algorithm}   \label{subsect4.3}  
									
									Pseudocode for AOP-Miner is given in Algorithm \ref{alg:4}.
									
									\begin{algorithm}[t] 
										\caption{AOP-Miner}
										\hspace*{0.02in} {\bf Input:}
										time sequence dataset \textbf{s}, local constraint $\delta$, global constraint $\gamma$, minimum support threshold \textit{minsup}  \\
										\hspace*{0.02in} {\bf Output:}
										frequent ($\delta$-$\gamma$) order-preserving pattern set $\textit{F}_m$
										\begin{algorithmic}[1]
											\State   Calculate the supports of candidate patterns (1,2) and (2,1) and store their occurrences in $A_{(1,2)}$ and $A_{(2,1)}$, respectively. If they are frequent, store them in $F_2$;
											\State    \textit{m}$\leftarrow$2;
											\While   { $\textit{F}_m$ != Null;}
											\State    $\textit{F}_{m+1}\leftarrow$Alar($\textit{F}_m,\textbf{s}, \delta, \gamma, \textit{misnup}$);
											\State      $\textit{m}\leftarrow\textit{m}+1$;
											\EndWhile
											\State return $\textit{F}_m$;
										\end{algorithmic}
										\label{alg:4}
									\end{algorithm}

									The AOP-Miner algorithm uses three key steps to mine all frequent AOPs.
									
									Step 1: The supports of candidate patterns (1,2) and (2,1) are calculated and frequent patterns are stored in ${{F}}_2$.
									
									Step 2: The Alar algorithm is applied to mine frequent AOPs with length $ {m}+1 $ using frequent AOPs with length ${m}$.
									
									Step 3: Step 2 is iterated until no new AOPs are found. 
									
									%           \subsection{Time and space complexities} 
									%           
									%          The time complexity of AOP-Miner is $O({m\times n\times N+N\times N})$, and the space complexity of AOP-Miner is $O({m\times N\times L+m\times n})$, where ${m}$, ${n}$, ${N}$ and ${L}$ respectively represent the pattern length, the sequence length, the number of candidate patterns and the length of the index array.
									%             
									%          The time complexity of the algorithm is mainly to calculate the support of the pattern and generate the candidate pattern. In the pattern support calculation, according to the match algorithm, the time complexity of the algorithm is $O({m\times n\times N})$. According to the pattern fusion strategy, the time complexity of generating candidate patterns is $O({N\times N})$.
									%            
									%         The space occupied by the algorithm is occupied by two parts, one is the storage of  candidate pattern and its corresponding index array, the other is the space for calculating pattern support. It can be seen from the above algorithms that the space complexity occupied by storing candidate patterns is $O({m\times N\times L})$, and the space for calculating pattern support is $O({m\times n})$.
									%            

\section{Experimental results and performance analysis}\label{section5}

In this section, we evaluate the performance of AOP-Miner. Section \ref{subsect5.1} explains the experimental environment and the datasets used, and Section \ref{subsect5.2} introduces the baseline methods. Section \ref{subsect5.3} verifies the performance of each strategy in AOP-Miner. %Section \ref{subsect5.4} validates the mining efficiency under exact mining conditions. The comparison of mining results with different local and global error thresholds are analyzed in Section \ref{subsect5.5}. 

\subsection{Dataset} \label{subsect5.1}

The experimental running environment was an Intel(R)Core(TM)i7-6560U processor, 2.20 GHZ CPU with 16 GB memory, Windows 10, a 64-bit operating system, and the program development environment was VS2017.

We used eight real-time series datasets to conduct our experiments, including oil, stock, air temperature,and air quality data. Oil datasets are the closing price of 1WTI crude oil and London Brent crude oil, which can be downloaded at http://www.fansmale.com/index.html. Stock datasets are the closing price of the Russell 2000 index and Nasdaq index, which can be downloaded at https://www.yahoo.com. The air temperature datasets are the average of the air temperature of Shunyi, Huairou, Changping and Tiantan, which can be downloaded at https://archive.ics.uci.edu/ml/datasets.php. The air quality datasets is the PM2.5 values of Beijing city which can be found at https://tianqi.2345.com. A description of each dataset is given in Table \ref{tab4}.

%\begin{spacing}{0.45} % 行间距 
\begin{table}[ht] % h 是固定在当前位置
	\setlength{\abovecaptionskip}{0cm} % 调整间距
	%\captionsetup{font={small}}        % 标题字体大小
	\caption{Occurrences of frequent patterns in Example \ref{example9}}\label{tab4}
	\centering
	\scriptsize
	\scalebox{0.97}{
		\begin{tabular}{cccc}
			\hline
			Dataset     & Name     & Type    & Length       \\ \hline
			
			SDB1      & 	1WTI (2000.07.26-2020.02.26)     & 	Oil   &  4,996  \\
			SDB2      & 	London Brent (2000.09.04-2020.02.26)    & 	Oil   &  5,000  \\
			SDB3      & 	Russell 2000 (1987.9.10-2019.12.27)     & 	Stock  &  8,141 \\
			SDB4      &    Nasdaq (1971.3.1-2019.10.31)  & 	Stock  &  12,279  \\
			SDB5      &    PRSA\_Data\_Shunyi (2015.1.1-2017.2.28)  & 	Air quality  &  18,960  \\
			SDB6      &    PRSA\_Data\_Huairu (2015.1.1-2017.2.28)   & 	Air quality  &  18,960  \\
			SDB7      & 	PRSA\_Data\_Changping (2013.3.1-2017.2.28)   & Air temperature  & 35,064 \\
			SDB8      & 	PRSA\_Data\_Tiantan (2013.3.1-2017.2.28)   & 	Air temperature  &  35,064  \\
			%            			     SDB9      & 	PRSA\_Data\_Beijing (2021.10.1-2021.10.31)   & 	Air quality  &  31 \\
			\hline          				
	\end{tabular}}
\end{table}
%            \end{spacing}

\subsection{Baseline methods} \label {subsect5.2}

To evaluate the performance of AOP-Miner, four alternative algorithms were proposed and a competitive algorithm was selected. The operation of each of these algorithms is described below.

1. seg-em-Miner and bit-em-Miner \cite{ref44}: To validate the efficiency of the candidate pattern generation and support calculation, we developed seg-em-Miner and bit-em-Miner, which used an enumeration strategy to generate the candidate patterns and the segtreeBA and bitBA algorithms to calculate the support, respectively.

2. em-Miner: To verify the performance of our pattern fusion strategy, we developed the em-Miner algorithm, which used the enumeration strategy to generate the candidate patterns. The support calculation method in em-Miner is the same as that of AOP-Miner .

%            3. seg-Miner and bit-Miner: To assess the efficiency of the support calculation, we used seg-Miner and bit-Miner, which employed the segtreeBA  and the bitBA algorithms  to calculate the support, respectively, and applied the pattern join strategy to generate the candidate patterns.

3. Nopruning-Miner: To validate the performance of our pruning strategy, we developed Nopruning-Miner, which adopted a pattern joining strategy to generate the candidate patterns and applied the screening strategy to calculate the support.

%            5. OPP-Miner \cite{ref26}: OPP-Miner is an exact order-preserving pattern mining algorithm. It is a special case of AOP mining, since when $\delta$ = 0 and $\gamma$ = 0, AOP-Miner  converts into OPP-Miner.

%seg-Miner, bit-Miner,            

\subsection{Performance of each strategy}\label{subsect5.3}

To verify the mining efficiency of each strategy in AOP-Miner, we assessed four competitive algorithms: seg-em-Miner, bit-em-Miner, em-Miner,  and Nopruning-Miner. Eight datasets were selected (SDB1 to SDB8). The experimental parameters were $\delta$ = 2 and $\gamma$ = 4. Since the lengths of the datasets were different, to mine reasonable number of patterns, {minsup} was set to 1000, 1000, 1600, 4000, 4000, 4000, 12000, and 12000 for SDB1–SDB8, respectively. These algorithms were able to find 15, 15, 18, 11, 12, 12, 13, and 11 AOPs from SDB1–SDB8, respectively. Comparisons of the number of candidate patterns and running time are shown in Figs. \ref{fig.4} and \ref{fig.5}.

\begin{figure}   % 放在当前页的最上方
	\setlength{\abovecaptionskip}{0cm} % 调整间距
	%            	\captionsetup{font={small}}        % 标题字体大小
	\centering
	\includegraphics[width=1.0\linewidth]{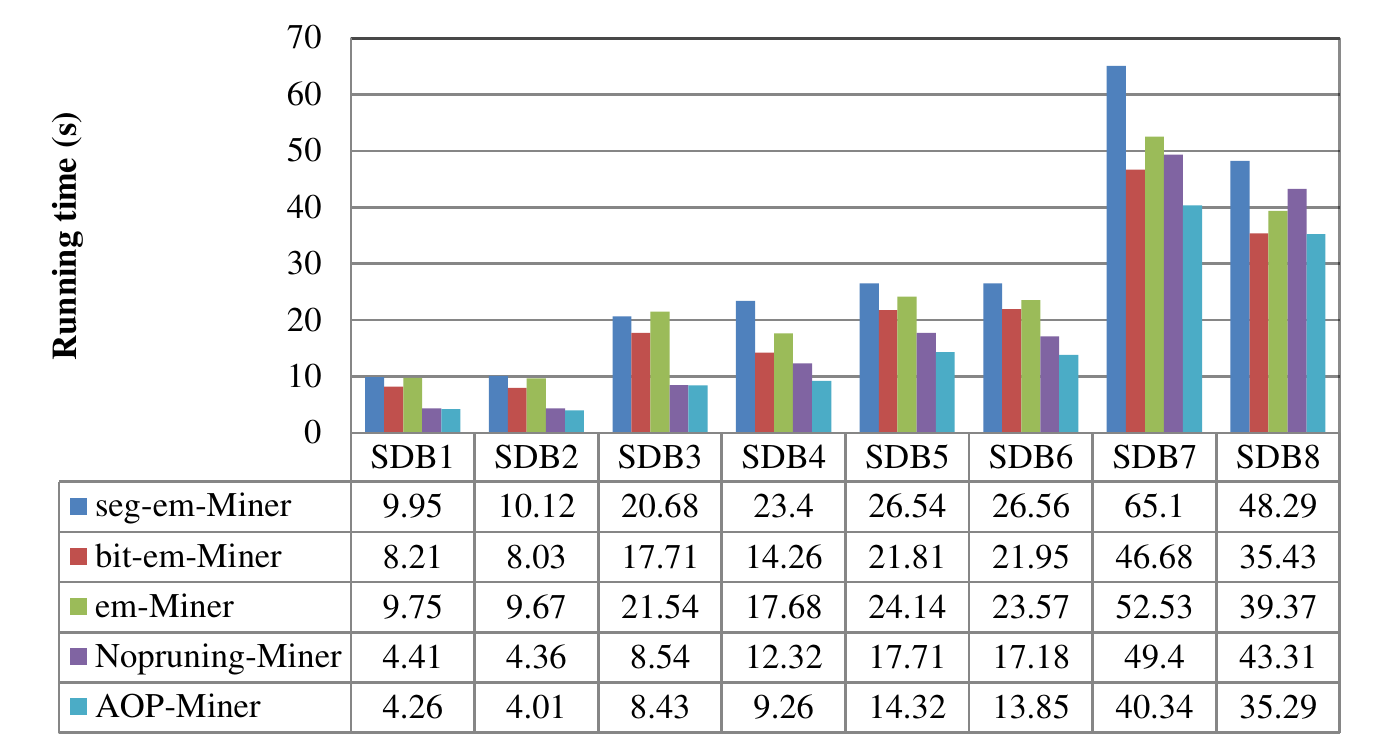}
	\caption{Comparison of number of candidate patterns for SDB1–SDB8}
	\label{fig.4}
	%         	\vspace{-0.1cm}    % 表标题与正文的行间距
\end{figure}

\begin{figure}   % 放在当前页的最上方
	\setlength{\abovecaptionskip}{0cm} % 调整间距
	%        	\captionsetup{font={small}}        % 标题字体大小
	\centering
	\includegraphics[width=1.0\linewidth]{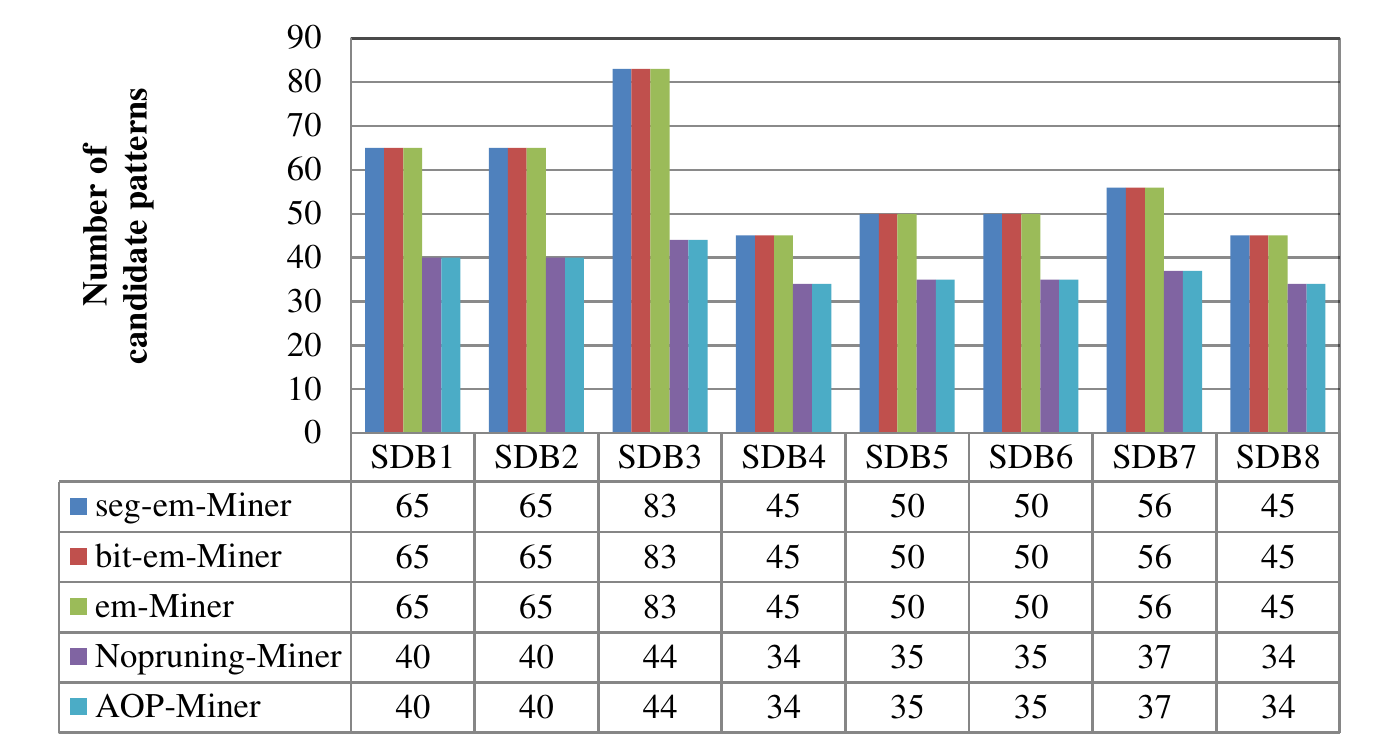}
	\caption{Comparison of running time for SDB1–SDB8 }
	\label{fig.5}
	%        	\vspace{-0.1cm}    % 表标题与正文的行间距
\end{figure}

The results give rise to the following observations.

(1) AOP-Miner runs faster than em-Miner, thus verifying the efficiency of the pattern fusion strategy. The two algorithms were able to find the same number of patterns on all datasets. Moreover, from Fig. \ref{fig.5}, we see that em-Miner took 9.75s for SDB1, while AOP-Miner took 4.26s. Similar results can be found for all other datasets. The reason for this is as follows. Although AOP-Miner and em-Miner adopt the same support calculation method, they use different methods to generate the candidate patterns. AOP-Miner runs faster than em-Miner, meaning that the pattern fusion strategy outperforms the enumeration strategy, since the enumeration strategy generates more candidate patterns. For example, from Fig. \ref{fig.4}, we see that em-Miner generates 65 candidate patterns, while AOP-Miner generates 40. The more the candidate patterns, the slower the algorithm runs. Hence, AOP-Miner outperforms em-Miner.

%      (2) AOP-Miner runs faster than seg-Miner and bit-Miner, a result that validates the efficiency of our support calculation method. The three algorithms were able to find the same number of patterns on all datasets. From Fig. \ref{fig.4}, we know that the three algorithms generate 8,957 candidate patterns on SDB2. However, the running time for seg-Miner and bit-Miner are 1972.89s and 1425.66s, respectively, while that of AOP-Miner is 316.93s. Similar results can be found for the other datasets. This is because all three algorithms adopt the pattern fusion strategy to generate candidate patterns, and therefore generate the same number of candidate patterns; however, AOP-Miner employs screening and pruning strategies to calculate the supports, an approach which is more effective than segtreeBA and the bitBA. Hence, AOP-Miner outperforms seg-Miner and bit-Miner.

(2) AOP-Miner runs faster than Nopruning-Miner, thus verifying the effectiveness of the pruning strategy. From Fig. \ref{fig.4}, we know that the two algorithms generate 44 candidate patterns on SDB3. However, the running time for Nopruning-Miner is 8.54s, while that of AOP-Miner is 8.43s. Similar effects can be found for the other datasets. We know that Nopruning-Miner is the same as AOP-Miner except for the pruning strategy. As Example \ref{example10} shows, the pruning strategy can avoid redundant support calculations, and hence, AOP-Miner outperforms Nopruning-Miner.

(3) AOP-Miner runs faster than seg-em-Miner and bit-em-Miner, which further validates the efficiency of both the pattern fusion strategy and the screening and pruning strategies for calculating the supports. From Fig. \ref{fig.4}, we know that seg-em-Miner and bit-em-Miner generate 45 candidate patterns on SDB4, while AOP-Miner generates 34 ones. Moreover, the running time of seg-em-Miner and bit-em-Miner are 23.4s and 14.26s, respectively, while that of AOP-Miner is 9.26s. Similar results can be found for all other datasets. The reason for this is that AOP-Miner employs more effective candidate pattern generation and support calculation methods, and therefore outperforms seg-em-Miner and bit-em-Miner.

In conclusion, AOP-Miner has better performance than competitive algorithms. 

\section{Conclusion}\label{section6}

To mine order-preserving patterns in a flexible way, we have proposed the use of AOP mining and have developed the AOP-Miner algorithm. This algorithm can not only mine patterns with identical trends but also patterns with similar trends. Users can set the level of similarity using the parameters $\delta$ and $\gamma$ to measure the local and global similarities, respectively. For candidate pattern generation, AOP-Miner uses a pattern fusion strategy that significantly reduces the number of candidate patterns. To calculate the pattern support, AOP-Miner employs a screening strategy to find the candidate occurrences of a superpattern based on the occurrences of subpatterns, and adopts a pruning strategy to further reduce the number of candidate patterns. These screening and pruning strategies can greatly lower the number of pattern matches, and significantly improve the mining performance. Experimental results on real-time series datasets indicate that AOP-Miner is an efficient AOP mining algorithm. %More importantly, the parameters $\delta$ and $\gamma$ can be used to mine patterns with similar trends and to control the level of similarity, meaning that more valuable patterns can be found.

Inspired by OPP mining \cite{ref26}  and  	($\delta$,$\gamma$)-approximate  order-preserving pattern matching \cite{ref44}, this paper investigates  approximate order-preserving pattern mining based on ($\delta$,$\gamma$)-distances. However, there are many similarity measurement methods, such as DTW \cite {keoghdtw}. DTW method can effectively handles shrink, scaling, and noise injection, while  ($\delta$,$\gamma$)-distances cannot. Hence, the research of approximate order-preserving pattern mining based on DTW distance is worth investigating.

\end{document}